\def\NAT@parse#1{}
\newtheorem{lemma}{\bf{Lemma}}
\newtheorem{theorem}{\bf{Theorem}}
\newtheorem{remark}{\bf{Remark}}
\title{\LARGE \bf On Word-of-Mouth and Private-Prior Sequential Social Learning}
\author{Andrea {Da Col}, Cristian R. Rojas and Vikram Krishnamurthy
\thanks{}
\thanks{This work has been partially supported by the Swedish Research Council under contract number 2023-05170, and by the Wallenberg AI, Autonomous Systems and Software Program (WASP) funded by the Knut and Alice Wallenberg Foundation, and the US National Science Foundation grant CCF-2112457. Andrea Da Col and Cristian R. Rojas are with the Division of Decision and Control Systems, KTH Royal Institute of Technology, 100 44 Stockholm, Sweden. Vikram Krishnamurthy is with the School of Electrical and Computer Engineering, Cornell University, Ithaca, NY, 14853, USA.
E-mails: {\tt\small andreadc@kth.se}, {\tt\small crro@kth.se}, {\tt \small vikramk@cornell.edu}.
}%
}
\begin{document}

\maketitle
\thispagestyle{empty}
\pagestyle{empty}



\begin{abstract}
Social learning constitutes a fundamental framework for studying interactions among rational agents who observe each other’s actions but lack direct access to individual beliefs. 
This paper investigates a specific social learning paradigm known as Word-of-Mouth (WoM), where a series of agents seeks to estimate the state of a dynamical system. 
The first agent receives noisy measurements of the state, while each subsequent agent relies solely on a degraded version of her predecessor’s estimate. 
A defining feature of WoM is that the final agent’s belief is publicly broadcast and subsequently adopted by all agents, in place of their own. 
We analyze this setting theoretically and through numerical simulations, noting that some agents benefit from using the belief of the last agent, while others experience performance deterioration.
\end{abstract}




\section{Introduction}
Use of synthetic datasets has become a standard practice in several machine learning workflows, enabling model training and validation, as well as data augmentation \cite{nikolenko2021synthetic}.
Notably, the repeated use of identical information makes this procedure prone to the problem of data incest \cite{vikram2003incest}. With the recent rise in popularity of generative machine learning models \cite{diffusion2024, zhao2023survey}, renewed interest and concern have emerged regarding the issues of misinformation propagation and mode collapse \cite{marchi2024heatdeathgenerativemodels}.
By incorporating synthetic data in the public web, generative models would actively update their training sets for future campaigns. In the worst case, injecting gibberish information can pollute this shared memory, leading to significant model degradation \cite{biggio2012poisoning}. Given the imminent large-scale deployment of generative artificial intelligence, a solid theoretical framework for understanding the effects of data incest is needed.

In this paper, we analyze the online learning dynamics of a group of decision-makers that are fed back their own outputs as part of future observations. Our study is conducted under the powerful lens of social learning, a fundamental mathematical framework for modeling interactions between social sensors \cite{AO11,Ban92,BHW92,BMS20,KH15,KP14,Say14b}. Widely used in economics and social sciences to model financial markets and network dynamics, social learning can effectively express how rational agents learn \cite{Cha04}. 
In social learning, agents estimate an underlying state of nature by using their private beliefs and the observed actions of their predecessors. This process follows a recursive Bayesian framework, wherein each agent applies Bayes' rule to infer the state based on the others' actions, each of which is itself the outcome of a Bayesian update. Such nested Bayesian inference can give rise to nontrivial phenomena. Notably, in finite state, observation and action spaces, agents often converge to an information cascade: after a finite number of time steps, they disregard private observations and blindly replicate past actions \cite{jain2025interacting}. 

Classical social learning models assume that each agent retains her private belief from the previous epoch, and updates it upon observing new actions of other agents.  In our discussion, the term private-prior (PP) is used to address this setup, where an agent is unaware of the other agents' beliefs.
A second architecture, inspired by Word-of-Mouth learning \cite{Vives01,Viv97}, is considered as an alternative to PP. 
In this setup, the first agent has access to external state observations, whereas subsequent ones can only rely on a function of their predecessors' actions. 
A crucial aspect of this framework is that the final agent dictates her personal belief to all the others, ultimately establishing a form of shared knowledge. 

Building on our recent work~\cite{slowConvergenceSL}, we study social learning among interacting Kalman filter agents that observe past actions in Gaussian noise. The agents estimate a scalar Gaussian state governed by linear dynamics, with their actions corresponding to their state estimates. 
We present a formal asymptotic analysis for the propagation of uncertainty in the PP and WoM settings, highlighting their similarities and differences.
This 
is nontrivial and requires careful analysis of the asymptotic properties of a discrete-time Riccati equation.

Unlike classical information cascades, we point out that WoM social learning affects the agents unevenly: while some of them experience degraded estimation performance, others benefit from shared information. In particular, WoM will produce more accurate estimates (smaller variance) for the last agent in the sequence, compared to PP learning. 
The opposite is true for the first agent of the group.
These last results can be interpreted in the context of a hierarchical social network.
Suppose a junior worker obtains noisy  measurements of the state and relays its noisy estimates (recommendations) to more senior workers,  eventually reaching  management. In the WoM framework, the manager sets the prior (corporate strategy) for the next sequence of  workers' recommendations.  In contrast, in the PP framework, each worker maintains their  own prior to update their recommendation. The implication is that a WoM manager is always better informed (has lower variance) than a PP manager when the underlying state evolves as a linear Gaussian system, despite some of the workers in the PP system being better informed.


Our contributions are threefold:

\vspace{-1mm}
\begin{itemize} \parsep 0pt \itemsep 0pt
\item We formulate a theoretical framework for Word-of-Mouth social learning in dynamical systems.
\item We characterize the stationary variance of the one-step-ahead prediction error as the fixed point of a Riccati equation, in both setups. Existence and uniqueness of this fixed point are established, and convergence is proved in the two-agent case.
\item We provide extensive numerical simulations comparing the two setups, validating the theoretical findings.
\end{itemize}

The paper is organized as follows: the private-prior and Word-of-Mouth frameworks are formally presented in Section~\ref{sec:PROBLEM FORMULATION}. A theoretical analysis 
of both setups is provided in Section~\ref{sec:THEORETICAL ANALYSIS}. Section~\ref{sec:EXPERIMENTS} complements the theoretical findings with numerical examples. Section~\ref{sec:CONCLUSIONS} concludes the paper.

\medskip
\textit{Notation:} Denote by $\mathbb{N}$ and $\mathbb{R}$ the sets of natural and real numbers, respectively. Let $\mathbb{R}_{>0}$ be the set of positive real numbers. Sequences are denoted by $(x_k)$.
With $\tilde x$ we indicate a realization for the random variable $x$. We use boldface fonts to denote exponents and normal fonts for superscripts. $\mathbb{E}$ is the expected value, and $\textrm{MSE}$ the mean squared error.



\section{Problem Formulation}\label{sec:PROBLEM FORMULATION}
A set $\mathcal{I}=\{1, \ldots, m\}$ of 
decision-makers aims at learning a time-dependent state of nature from private online measurements. 
Indexed by $i\in\mathcal{I}$, the agents are interconnected in a serial structure and are only allowed to operate sequentially: one cannot take actions before her predecessor.
In addition, they measure inherently different quantities: while the first agent can directly access the unknown state (contaminated with noise), each subsequent agent can only observe (in noise) the estimate produced by the one before her. Private observations are used by each agent to update her own belief of the unknown state, in a Bayesian way.

\subsection{Model and Agent Description}\label{subsec:MODEL SELECTION}
We consider a data-generating mechanism governed by the following first-order autoregressive dynamics:
\begin{equation}\label{eq:AUTOREGRESSIVE DYNAMICS}
    x_k = a x_{k-1} + w_k, \qquad k\in\mathbb{N},
\end{equation}
where $x_k, w_k \in\mathbb{R}$ are the state of the system and the process noise at time $k$, respectively. Let $x_0$ and $(w_k)$ be mutually independent Gaussian distributed random variables, such that $x_0\sim\mathcal{N}(\hat{x}_0,p_0)$ and $w_k\sim\mathcal{N}(0,q)$. We assume that the state dynamics are asymptotically stable, i.e., $a\in(-1,1)$, and that the agents of $\mathcal{I}$ can measure changes in the state according to the following set of output equations:
\begin{equation}\label{eq:OBSERVATION MODELS}
    y_k^{i} = x_k + n_k^{i}, \qquad k\in\mathbb{N}, \ i \in \mathcal{I},
\end{equation}
where $y_k^{i}, n_k^{i}\in\mathbb{R}$ are the observed output and the measurement noise of agent $i$ at time $k$, respectively. 
The Gaussian process $(n_k^{i})$, where $n_k^{i}\sim\mathcal{N}(0,r_k^{i})$, is assumed uncorrelated in time and independent of $x_0$ and $(w_k)$. 
With \eqref{eq:OBSERVATION MODELS} we express the quality of information that agent $i$ can access, which depends on the way agents are interconnected. 
Finally, we assume that all densities are non-degenerate, namely that $p_0, q \in\mathbb{R}_{>0}$, and $r_k^{i}\in\mathbb{R}_{>0}$ for every $k\in\mathbb{N}$, $i\in\mathcal{I}$. 

An agent $i$ implementing \eqref{eq:AUTOREGRESSIVE DYNAMICS} and \eqref{eq:OBSERVATION MODELS} can optimally estimate $x_k$ from her local observations $\{ \tilde{y}_1^{i}, \ldots,  \tilde{y}_k^{i} \}$ in the least squares sense \cite[Chapter\ 5]{simon2006optimal} by using the Kalman filter: 
%
\begin{equation}\label{eq:KALMAN UPDATE}
    \begin{aligned}
        & p_{k|k-1}^{i} = a^2 p_{k-1|k-1}^{i} + q, \\
        & \hat x_{k|k-1}^{i} = a \hat x_{k-1|k-1}^{i}, \\
        & \alpha_k^{i} = p_{k|k-1}^{i}/(p_{k|k-1}^{i} + r_k^{i}), \\
        & p_{k|k}^{i} = p_{k|k-1}^{i}(1 - \alpha_k^{i}), \\
        & \hat x_{k|k}^{i} = \hat x_{k|k-1}^{i} + \alpha_k^{i} (\tilde{y}_k^{i} - \hat x_{k|k-1}^{i} ),  
    \end{aligned}
\end{equation}
where $\hat{x}_{k|k-1}^{i}$ ($\hat{x}_{k|k}^{i}$) and $p_{k|k-1}^{i}$ ($p_{k|k}^{i}$) denote the predictive (posterior) mean and variance of agent $i$, respectively.
Here we set $\hat{x}_{0|0}^{i} = \hat{x}_{0}$ and $p_{0|0}^{i} = p_0$, for all $i\in\mathcal{I}$. 
Since $p_{k|k-1}^{i}$, $\alpha_k^{i}$ and $p_{k|k}^{i}$ do not depend on data, we will assume them to be publicly available to every agent.

Except for the first agent, the other ones with $i\in\mathcal{I}\setminus\{1\}$ can only observe a noisy version of $\hat{x}_{k|k}^{i-1}$, at each time $k$.
We denote by $(v_k^{i})$ the additive white Gaussian noise injected immediately before $i$, where $v_k^{i}\sim\mathcal{N}(0, s^{i})$ and $s^{i}\in\mathbb{R}_{>0}$. 
We also assume that $(v_k^{i})$ is independent of the other random variables, namely $x_0$, $(w_k)$, and $(v_k^j)$ for $i\neq j$.

\begin{remark}
    \textit{It is important to understand that $k$ stays fixed until every agent $i$ updates her estimate. In this regard, we are considering a system with two time scales, indexed by $k\in\mathbb{N}$ and $i\in\mathcal{I}$, respectively. 
    While the first one regulates the state dynamics in \eqref{eq:AUTOREGRESSIVE DYNAMICS}, the latter models the sequential computation and exchange of posterior estimates between the agents according to \eqref{eq:KALMAN UPDATE}. 
    }
\end{remark}

Two setups will be addressed in this paper: one where each $i\in\mathcal{I}$ updates her private prior via a Kalman filter, and one where all agents use a common prior, that is set by the $m$-th agent in a Word-of-Mouth fashion. 

\subsection{Private-Prior Setup}\label{subsec:PRIVATE PRIOR}

\begin{figure*}[t!]
    \vspace{0.4 em}
    \centering
    \includegraphics[width=\textwidth]{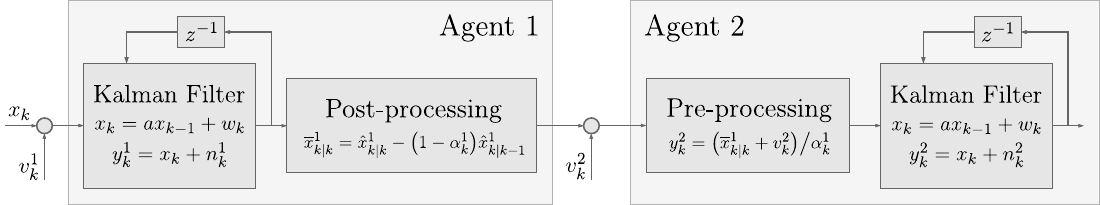}
    \caption{Interconnection of $m=2$ agents implementing a PP setup.}
    \label{fig:CASCADE ARCHITECTURE}
\end{figure*}

In this subsection, we derive recursive expressions for an agent $i\in\mathcal{I}$ embedded in a PP setup, as depicted in Figure~\ref{fig:CASCADE ARCHITECTURE} for the case $m=2$. In doing so, we reduce ourselves to work with the multi-agent system introduced in Section~\ref{subsec:MODEL SELECTION}.

The prediction step is independent of the newest collected measurement $\tilde{y}_k^{i}$, and hence
\begin{equation}\label{eq:PREDICTION STEP CASCADE, m AGENTS}
    \begin{aligned}
        & p_{k|k-1}^{i} = a^2 p_{k-1|k-1}^{i} + q, \\
        & \hat x_{k|k-1}^{i} = a \hat x_{k-1|k-1}^{i},
    \end{aligned}
\end{equation}
can be updated independently by each $i\in\mathcal{I}$ using {private prior} information of $x_{k-1}$, that is, $p^{i}_{k-1|k-1}$ and $\hat{x}^{i}_{k-1|k-1}$.

On the other hand, the a-posteriori update depends on the estimates of other decision-makers, and its derivation is more involved. 
To draw some intuition, we start from the case of $m=2$, and then provide general expressions for $m\geq 2$.
At each time step $k$, the leftmost agent can observe the unknown state directly through $y_k^{1} = x_k + v_k^{1}$. This observation model is the same as in \eqref{eq:OBSERVATION MODELS}, but with $n_k^{1} = v_k^{1}$. Thus, the posterior parameters for agent 1 are obtained from \eqref{eq:KALMAN UPDATE} as
\begin{equation}\label{eq:KALMAN UPDATE CASCADE AGENT 1}
    \begin{aligned}
        & \alpha_k^{1} = p_{k|k-1}^{1} / (p_{k|k-1}^{1} + s^{1}), \\
        & p_{k|k}^{1} = p_{k|k-1}^{1}(1 - \alpha_k^{1}), \\
        & \hat x_{k|k}^{1} = \hat x_{k|k-1}^{1} + \alpha_k^{1}(\tilde{y}_k^{1} - \hat x_{k|k-1}^{1}).
    \end{aligned}
\end{equation}
The estimate $\hat x_{k|k}^{1}$ is forwarded to the next agent only after undergoing a post-processing operation
\begin{equation}\label{eq:POST-PROCESSING}
    \overline x_{k|k}^{1} 
    \coloneqq \hat{x}_{k|k}^{1} - (1 - \alpha_k^{1}) \hat{x}_{k|k-1}^{1} 
    \overset{\eqref{eq:KALMAN UPDATE}}{=} \alpha_k^{1} ({{x}_k} + v_k^{1}),
\end{equation}
that removes private prior information available only to agent 1.
As illustrated in Figure~\ref{fig:CASCADE ARCHITECTURE}, agent 2 receives the signal $\overline x_{k|k}^{1} + v_k^{2}$. A pre-processing step yields
\begin{equation}\label{eq:PRE-PROCESSING}
    \begin{aligned}
        y_k^{2} 
        \coloneqq \big( \overline x_{k|k}^{1} + v_k^{2} \big) \big/ \alpha_k^{1}  
        \overset{\eqref{eq:POST-PROCESSING}}{=} x_k + v_k^{1} + v_k^{2} / \alpha_k^{1},
    \end{aligned}
\end{equation}
which is then handled by the Kalman filter of agent 2. By doing this, we obtain the same observation model as in \eqref{eq:OBSERVATION MODELS}, where $n_k^{2} \coloneqq v_k^{1} + v_k^{2} / \alpha_k^{1}$ is a zero-mean Gaussian distributed random variable with variance 
\begin{equation}\label{eq:EQUIVALENT NOISE, AGENT 2}
    r_k^{2} = s^{1} + s^{2} / \big( \alpha_k^{1} \big)^\bold{2}.
\end{equation}
This is a direct consequence of the fact that all the previously defined random variables are independent and Gaussian distributed.
Agent 2 can then retrieve its posterior estimate $\hat{x}_{k|k}^{2}$ by substituting \eqref{eq:EQUIVALENT NOISE, AGENT 2} into the expression of $\alpha_k^{2}$.

Let now $m\geq 2$, and consider an arbitrary $i\in\mathcal{I}\setminus \{1\}$. As before, one can modify \eqref{eq:POST-PROCESSING} and \eqref{eq:PRE-PROCESSING} to hold for an arbitrary index $i$ and combine them as

\begin{equation*}\label{eq:PRE and POST PROCESSING, m AGENTS}
    \begin{aligned}
         y_k^{i} 
        &\coloneqq \big[\hat x_{k|k}^{i-1} - (1 - \alpha_k^{i-1})\hat x_{k|k-1}^{i-1} + v_k^{i}\big] \big/ \alpha_k^{i-1} \\
        & = x_k + n_k^{i-1} + v_k^{i}/ \alpha_k^{i-1},
    \end{aligned}
\end{equation*}
for every $i\in\mathcal{I}\setminus \{1\}$. Using a recursive expression, we are able to define the equivalent noise $n_k^{i}\coloneqq n_k^{i-1} + v_k^{i}/\alpha_k^{i-1}$, which is a zero-mean Gaussian variable with variance
\begin{equation}\label{eq:EQUIVALENT NOISE, AGENT m}
    \begin{aligned}
        r_k^{i} 
        = s^{1} + \sum_{j=2}^{i} \frac{s^{j}}{ \big( \alpha_k^{j-1} \big)^\bold{2}}
        , \quad i\in\mathcal{I}.
    \end{aligned}
\end{equation}
\begin{remark}
    \textit{The variance $r_k^i$ is strictly increasing on $\mathcal{I}$, that is, $r_k^{i}>r_k^j$ for every $i,j\in\mathcal{I}$ such that $i>j$.}
\end{remark}
The posterior parameters $\hat{x}_{k|k}^{i}$ and ${p}_{k|k}^{i}$ are computed by substituting \eqref{eq:EQUIVALENT NOISE, AGENT m} into the measurement update in \eqref{eq:KALMAN UPDATE} to obtain
\begin{equation}\label{eq:CORRECTION STEP CASCADE, m AGENTS}
    \begin{aligned}
        & \alpha_k^{i} = p_{k|k-1}^{i}\Big/ \Big[p_{k|k-1}^{i} + s^{1} + \textstyle\sum_{j = 2}^{i} s^{j}\big/\big( \alpha_k^{j-1} \big)^\bold{2}\Big], \\
        & p_{k|k}^{i} = p_{k|k-1}^{i}(1 - \alpha_k^{i}), \\
        & \hat x_{k|k}^{i} = \hat x_{k|k-1}^{i} + \alpha_k^{i} (\tilde{y}_k^{i} - \hat x_{k|k-1}^{i} ),
        \qquad\qquad i\in\mathcal{I}.
    \end{aligned}
\end{equation}


\subsection{Word-of-Mouth Setup}\label{subsec:PUBLIC PRIOR}
In the WoM framework, the decision-makers are again interconnected in series but share a common prior. In particular, agent $m$ is enforcing her posterior to be the next prior for everyone else, as is illustrated in Figure~\ref{fig:WoM ARCHITECTURE} for the case $m=2$.

Our previous derivations can be used as the starting point for describing
WoM social learning. After substituting $x_{k-1|k-1}^{i} = x_{k-1|k-1}^m$ into \eqref{eq:PREDICTION STEP CASCADE, m AGENTS} we obtain
\begin{equation}\label{eq:PREDICTION STEP WoM, m AGENTS}
    \begin{aligned}
        & p_{k|k-1}^{i} = a^2 p_{k-1|k-1}^{m} + q, \\
        & \hat x_{k|k-1}^{i} = a \hat x_{k-1|k-1}^{m},
    \end{aligned}
    \qquad i\in\mathcal{I}.
\end{equation}
Perhaps not surprisingly, the $m$-th agent is also imposing her one-step-ahead prediction, that is, $\hat{x}^i_{k|k-1} = \hat{x}^m_{k|k-1}$ and ${p}^i_{k|k-1} = {p}^m_{k|k-1}$ for every $i$. The a-posteriori update 
\begin{equation}\label{eq:CORRECTION STEP 1 WoM, m AGENTS}
    \begin{aligned}
        & \alpha_k^{i} = p_{k|k-1}^{m}\Big/ \Big[p_{k|k-1}^{m} + s^{1} + \textstyle\sum_{j = 2}^{i} s^{j}\big/\big( \alpha_k^{j-1} \big)^\bold{2}\Big], \\
        & p_{k|k}^{i} = p_{k|k-1}^{m}(1 - \alpha_k^{i}), \\
        & \hat x_{k|k}^{i} = \hat x_{k|k-1}^{m} + \alpha_k^{i} (\tilde{y}_k^{i} - \hat x_{k|k-1}^{m} ),
    \end{aligned}
\end{equation}
yields the posterior parameters for WoM social learning of agent $i$. By defining $\gamma_k^{i} \coloneqq s^{i}\big/\big( \alpha_k^{i-1} \big)^\bold{2}$ as the new variance contribution to the equivalent noise entering immediately before $i\in\mathcal{I}\setminus \{1\}$, we rewrite the posterior variance as
\begin{equation}\label{eq:CORRECTION STEP 2 WoM, m AGENTS}
    \begin{aligned}
        & p_{k|k}^{i} 
        = \dfrac{p_{k|k-1}^{m} \big(s^1 + \textstyle\sum_{j=2}^{i} \gamma_k^{j} \big)}{p_{k|k-1}^{m} + s^{1} + \textstyle\sum_{j=2}^{i} \gamma_k^{j}},
        &i\in\mathcal{I}, \\
        & \gamma_k^{i}
        = s^{i} \left[ 1 + \dfrac{\big(s^1 + \textstyle\sum_{j=2}^{i-1} \gamma_k^j \big)}{p_{k|k-1}^{m}} \right]^{\boldsymbol{2}},&i \in \mathcal{I}\setminus \{1\}.
    \end{aligned}
\end{equation}

\begin{remark} \textit{Let $i\in\mathcal{I}\setminus\{1\}$. In the WoM setup, the random variables $y_k^{i}$ and $\hat{x}^{i-1}_{k|k} + v_k^{i}$ are observationally equivalent \cite{Vives01}, as there exists a deterministic bijection between them. Intuitively, they both express the new information available about $x_k$, given the measurements.}
\end{remark}

\begin{figure*}[t!]
    \vspace{0.4 em}
    \centering
    \includegraphics[width=\textwidth]{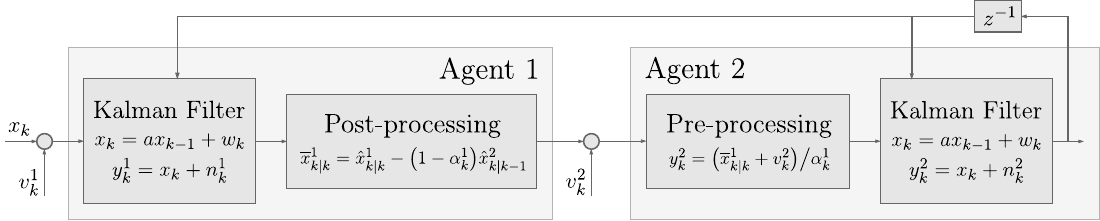}
    \caption{Interconnection of $m=2$ agents implementing a WoM setup.}
    \label{fig:WoM ARCHITECTURE}
\end{figure*}





\section{Analysis of PP and WoM Learning}\label{sec:THEORETICAL ANALYSIS}
In this section, we study the multi-agent estimation problems formulated in Sections~\ref{subsec:PRIVATE PRIOR} and \ref{subsec:PUBLIC PRIOR}. In particular, we are interested in characterizing the asymptotic behavior of the $m$-th agent's predictive and posterior variances for both scenarios. With this in mind, we first prove convergence of $p_{k|k-1}^{i}$ to a unique positive fixed point for all $i\in\mathcal{I}$ in the PP setup. The reader should notice that this implies convergence of the associated posterior $p_{k|k}^{i}$. After this, we show that a unique positive fixed point also exists for the WoM case \eqref{eq:PREDICTION STEP WoM, m AGENTS}.

\subsection{Private-Prior Setup}
In Section~\ref{subsec:PRIVATE PRIOR}, each agent is estimating the state of a time-invariant and asymptotically stable system \eqref{eq:AUTOREGRESSIVE DYNAMICS} while retaining their individual beliefs. To account for informational asymmetry, the agents adopt custom observation models \eqref{eq:OBSERVATION MODELS} that differ on their measurement noise parameters \eqref{eq:EQUIVALENT NOISE, AGENT m}. 
The first agent is the only one employing a time-invariant observation model: $n_k^{i}$ is stationary because $r_k^1 = s^{1}$ is constant.
Instead, every $i\in\mathcal{I}\setminus \{1\}$ is implementing a time-varying state-space model, since $n_k^{i}$ has time-dependent variance $r_k^{i}$. The following preparatory lemma provides sufficient conditions for convergence to a steady-state Kalman filter.


\begin{lemma}\label{lemma:CONVERGENCE TO STEADY STATE KALMAN FILTER}
\textit{Consider the iteration}
\begin{align*}
    p_{k+1|k} 
    = 
    a^{\boldsymbol{2}} p_{k|k-1} - \frac{(a p_{k|k-1})^{\boldsymbol{2}}}{ p_{k|k-1} + r_{k}} + q,
\end{align*}
\textit{where $a\in(-1,1)$, $q \in \mathbb{R}_{>0}$, $(r_k)$ is a sequence in $\mathbb{R}_{>0}$ such that $r_k \to {r_\infty} \in \mathbb{R}_{>0}$, and $p_{1|0} \in \mathbb{R}_{>0}$ is arbitrary. Then, $p_{k|k-1}$ converges to the unique positive solution $p_\infty$ of the discrete algebraic Riccati equation (DARE)}
\begin{equation*}\label{eq:DARE original}
p_\infty = a^{\boldsymbol{2}} p_\infty -\dfrac{(a p_\infty)^{\boldsymbol{2}}}{p_\infty + r_\infty} + q.
\end{equation*}
\textit{Also, $(1-\alpha_\infty)\, a \in (-1, 1)$, where $\alpha_\infty = p_\infty / (p_\infty + {r_\infty})$.}
\end{lemma}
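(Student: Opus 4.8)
The plan is to first rewrite the recursion in a form that exposes a \emph{uniform} contraction, and then treat the convergence $r_k \to \overline{r}$ as a vanishing perturbation of an autonomous fixed-point iteration. The starting point is the algebraic identity
\begin{equation*}
a^{\boldsymbol{2}} p - \frac{(a p)^{\boldsymbol{2}}}{p + r} + q = \frac{a^{\boldsymbol{2}} r\, p}{p + r} + q,
\end{equation*}
by which the iteration becomes $p_{k|k-1} = F_{k-1}(p_{k-1|k-2})$, where $F_k(p) \coloneqq a^{\boldsymbol{2}} r_k\, p/(p + r_k) + q$. I would then record the elementary properties of $F_k$ on $[0,\infty)$: it is increasing and concave, with $F_k(0) = q$ and $F_k(p) \le a^{\boldsymbol{2}} r_k + q$, while its derivative $F_k'(p) = a^{\boldsymbol{2}} r_k^{\boldsymbol{2}}/(p + r_k)^{\boldsymbol{2}}$ satisfies $0 \le F_k'(p) \le a^{\boldsymbol{2}} < 1$ for all $k$ and all $p \ge 0$. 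Hence every $F_k$ is a contraction of modulus $a^{\boldsymbol{2}}$, \emph{uniformly} in $k$, and after the first step the iterates stay in the bounded positive interval $(q, a^{\boldsymbol{2}} \sup_k r_k + q]$, so positivity and boundedness are automatic.

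Next I would establish existence and uniqueness of the positive DARE solution. Clearing denominators in $p = a^{\boldsymbol{2}} \overline{r}\, p/(p + \overline{r}) + q$ gives
\begin{equation*}
p^{\boldsymbol{2}} + \big(\overline{r}(1 - a^{\boldsymbol{2}}) - q\big)\, p - q\overline{r} = 0,
\end{equation*}
whose two roots multiply to $-q\overline{r} < 0$; exactly one of them is therefore positive, and this is $p_\infty$. Equivalently, the limiting map $F(p) \coloneqq a^{\boldsymbol{2}} \overline{r}\, p/(p + \overline{r}) + q$ is a contraction on the complete invariant interval $[q, a^{\boldsymbol{2}} \overline{r} + q]$, so Banach's theorem supplies the same unique fixed point $p_\infty > 0$.

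The core step is convergence for time-varying $r_k$. Setting $d_k \coloneqq |p_{k|k-1} - p_\infty|$ and using $p_\infty = F(p_\infty)$, the triangle inequality together with the uniform contraction bound gives $d_k \le a^{\boldsymbol{2}} d_{k-1} + \varepsilon_{k-1}$, where $\varepsilon_k \coloneqq |F_k(p_\infty) - F(p_\infty)| \to 0$ because $F_k(p_\infty)$ depends continuously on $r_k$ and $r_k \to \overline{r}$. It then remains to invoke the standard fact that a nonnegative sequence obeying $d_k \le L\, d_{k-1} + \varepsilon_{k-1}$ with $L \in [0,1)$ and $\varepsilon_k \to 0$ must converge to $0$; one proves this by unrolling to $d_k \le L^{k-1} d_1 + \sum_j L^{k-1-j} \varepsilon_j$ and splitting the convolution sum into a decaying head and a small tail. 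I expect this perturbed-contraction step to be the main technical point, the only real work being the control of the forcing term $\varepsilon_k$, which the uniform-in-$k$ Lipschitz bound renders routine.

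Finally, the steady-state gain claim is immediate. Since $1 - \alpha_\infty = \overline{r}/(p_\infty + \overline{r})$, we have $(1 - \alpha_\infty)\, a = a\overline{r}/(p_\infty + \overline{r})$; because $|a| < 1$ and $0 < \overline{r}/(p_\infty + \overline{r}) < 1$ (as $p_\infty \ge q > 0$), its magnitude is strictly less than $1$, i.e.\ $(1 - \alpha_\infty)\, a \in (-1,1)$. I would note in passing that $\big((1 - \alpha_\infty)\, a\big)^{\boldsymbol{2}} = F'(p_\infty)$, so this closed-loop stability is exactly the local contractivity exploited above.
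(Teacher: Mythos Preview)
Your argument is correct and self-contained, but it proceeds along a genuinely different line from the paper's own proof. The paper rewrites the Riccati recursion in projective coordinates: setting $\check p_k \coloneqq p_{k|k-1}/r_{k-1}$ and $\check q_{k-1} \coloneqq q/r_{k-1}$, the iteration is cast as
\[
\begin{bmatrix} \check p_k \\ 1 \end{bmatrix}
= \beta_k
\begin{bmatrix} a^{\boldsymbol{2}}+\check q_{k-1} & \check q_{k-1} \\ 1 & 1 \end{bmatrix}
\begin{bmatrix} \check p_{k-1} \\ 1 \end{bmatrix},
\]
that is, as the power method applied to a sequence of strictly positive $2\times 2$ matrices that converge entrywise as $r_k\to\overline r$; convergence of the ratio $\check p_k$ is then deduced from Perron's theorem together with a cited perturbation result for the nonautonomous power iteration. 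You instead exploit the scalar structure directly: the map $F_k(p)=a^{\boldsymbol{2}} r_k\,p/(p+r_k)+q$ is a contraction of modulus $a^{\boldsymbol{2}}$ \emph{uniformly in $k$}, so the time-varying recursion becomes a perturbed fixed-point iteration $d_k\le a^{\boldsymbol{2}} d_{k-1}+\varepsilon_{k-1}$ with vanishing forcing, which you dispatch by unrolling the convolution. Your approach is more elementary---no spectral theory and no external reference for the time-varying power method---and it delivers an explicit geometric rate $a^{\boldsymbol{2}}$; your closing remark that $\big((1-\alpha_\infty)a\big)^{\boldsymbol{2}}=F'(p_\infty)$ neatly ties the closed-loop stability to the same contractivity. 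The paper's projective/power-method viewpoint, by contrast, is the formulation that traditionally generalizes to vector-state Riccati equations (via the Hamiltonian/symplectic matrix), whereas the uniform bound $F_k'\le a^{\boldsymbol{2}}$ you rely on is a feature of the scalar case.
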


\begin{proof}
The properties of the DARE are well known~\cite[Chapter 5]{anderson2012optimal}. The update rule for $p_{k+1|k}$ can be written as
\begin{align*}
\begin{bmatrix}
p_{k+1|k} \\
1
\end{bmatrix}
= \beta_{k|k-1} \begin{bmatrix}
a^{\boldsymbol{2}} r_k + q & qr_k \\
1 & r_k
\end{bmatrix}
\begin{bmatrix}
p_{k|k-1} \\
1
\end{bmatrix},
\end{align*}
where $\beta_{k|k-1} := 1/(p_{k|k-1} + r_k)$ is a normalization factor. These iterations correspond to the power method (with a non-standard normalization factor), whose convergence is assured since the square matrix above is strictly positive (so it has a unique positive eigenvalue of maximum absolute value, by Perron's theorem~\cite{Horn-Johnson-12}) and $r_k \to r_\infty$ as $k \to \infty$; c.f.~\cite{Hardt-Price-14}. Also, the limit value of $p_{k+1 | k}$ has to satisfy the DARE.
\end{proof} 

The following cascade behavior is expected to occur: after the Kalman filter of agent $i$  converges to steady-state, so will do the Kalman filter of agent $i+1$, for every $i \in \mathcal{I}\setminus\{m\}$. This result is formalized in the following theorem:

\begin{theorem}\label{th:convergence of the cascade}
    \textit{The PP architecture implementing \eqref{eq:PREDICTION STEP CASCADE, m AGENTS} and \eqref{eq:CORRECTION STEP CASCADE, m AGENTS} admits a unique positive solution $p_\infty^{i}$ to}
    \begin{equation*}
        \begin{aligned}
            p_\infty^{i} = a^{\boldsymbol{2}}  p_\infty^{i} - \dfrac{(a p_\infty^{i})^{\boldsymbol{2}}}{p_\infty^{i} + r_\infty^{i}} + q, \qquad i\in\mathcal{I},
        \end{aligned} 
    \end{equation*}    
    \textit{where $r_\infty^i = s^1 + \sum_{j=2}^{i}s^{j}/ (\alpha_\infty^{j-1})^{\boldsymbol{2}}$. 
    In addition, $p_{k|k-1}^{i}\rightarrow p_\infty^{i}$ for any $p_{1|0}^{i}\in\mathbb{R}_{>0}$, and $(1 - \alpha_\infty^{i}) a \in (-1,1)$. 
    }
\end{theorem}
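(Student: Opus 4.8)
The plan is to argue by induction on the agent index $i \in \mathcal{I}$, reducing each agent's prediction-variance recursion to the scalar iteration of Lemma~\ref{lemma:CONVERGENCE TO STEADY STATE KALMAN FILTER} and then invoking that lemma verbatim. The first step is to eliminate the posterior variance from \eqref{eq:PREDICTION STEP CASCADE, m AGENTS}--\eqref{eq:CORRECTION STEP CASCADE, m AGENTS} and obtain a closed recursion in $p_{k|k-1}^{i}$ alone. Substituting $p_{k-1|k-1}^{i} = p_{k-1|k-2}^{i}(1-\alpha_{k-1}^{i})$, with $\alpha_{k-1}^{i} = p_{k-1|k-2}^{i}/(p_{k-1|k-2}^{i} + r_{k-1}^{i})$, into the prediction step yields
\[
p_{k|k-1}^{i} = a^{\boldsymbol{2}} p_{k-1|k-2}^{i} - \frac{(a\, p_{k-1|k-2}^{i})^{\boldsymbol{2}}}{p_{k-1|k-2}^{i} + r_{k-1}^{i}} + q,
\]
which is precisely the iteration of Lemma~\ref{lemma:CONVERGENCE TO STEADY STATE KALMAN FILTER} with the time-varying noise $r_k = r_k^{i}$. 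Consequently the whole theorem reduces to checking, for each $i$, that the noise sequence $(r_k^{i})$ from \eqref{eq:EQUIVALENT NOISE, AGENT m} converges to a strictly positive limit.

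For the base case $i=1$, the noise is constant, $r_k^{1} = s^{1} > 0$, so $r_k^{1} \to s^{1}$ trivially, and Lemma~\ref{lemma:CONVERGENCE TO STEADY STATE KALMAN FILTER} directly supplies existence and uniqueness of a positive $p_\infty^{1}$, convergence $p_{k|k-1}^{1} \to p_\infty^{1}$ from any positive initialization, and the stability bound $(1-\alpha_\infty^{1})\,a \in (-1,1)$.

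For the inductive step, I would assume the claim holds for all indices up to $i-1$, so $p_{k|k-1}^{j} \to p_\infty^{j}$ for every $j \le i-1$. Because each Kalman gain satisfies $\alpha_k^{j} = p_{k|k-1}^{j}/(p_{k|k-1}^{j} + r_k^{j}) \in (0,1)$ with $p_{k|k-1}^{j}>0$ and $r_k^{j}>0$ along the entire trajectory, convergence of $p_{k|k-1}^{j}$ together with that of $r_k^{j}$ forces $\alpha_k^{j} \to \alpha_\infty^{j} = p_\infty^{j}/(p_\infty^{j} + r_\infty^{j}) \in (0,1)$, a strictly positive limit. Plugging these into \eqref{eq:EQUIVALENT NOISE, AGENT m} then gives
\[
r_k^{i} = s^{1} + \sum_{j=2}^{i} \frac{s^{j}}{(\alpha_k^{j-1})^{\boldsymbol{2}}} \;\longrightarrow\; r_\infty^{i} = s^{1} + \sum_{j=2}^{i} \frac{s^{j}}{(\alpha_\infty^{j-1})^{\boldsymbol{2}}} > 0,
\]
where finiteness and positivity of the limit hold precisely because each $\alpha_\infty^{j-1}$ is bounded away from $0$. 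A final application of Lemma~\ref{lemma:CONVERGENCE TO STEADY STATE KALMAN FILTER} with $\overline{r} = r_\infty^{i}$ yields the unique positive DARE solution $p_\infty^{i}$, the convergence $p_{k|k-1}^{i} \to p_\infty^{i}$, and $(1-\alpha_\infty^{i})\,a \in (-1,1)$; the induction closes by observing $\alpha_k^{i} \to \alpha_\infty^{i} > 0$, which is what the next stage consumes.

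The main obstacle is the endogenous, interlocking dependence of the noise variances: $r_k^{i}$ is genuinely time-varying and is determined by the (not-yet-converged) gain trajectories of agents $1,\dots,i-1$, so the constant-noise steady-state Kalman theory cannot be applied to agent $i$ in isolation. What rescues the argument is the triangular structure of \eqref{eq:EQUIVALENT NOISE, AGENT m}---agent $i$'s noise depends only on earlier gains---which is exactly what allows the induction to propagate. The one quantitative point that must be pinned down carefully is that each limiting gain $\alpha_\infty^{j-1}$ stays strictly positive (equivalently, that the gains do not collapse to zero), since this is precisely the $\overline{r}>0$ hypothesis required by Lemma~\ref{lemma:CONVERGENCE TO STEADY STATE KALMAN FILTER}.
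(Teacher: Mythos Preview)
Your proposal is correct and follows essentially the same inductive route as the paper's proof: both reduce each agent's recursion to the iteration of Lemma~\ref{lemma:CONVERGENCE TO STEADY STATE KALMAN FILTER} and verify, inductively via the triangular structure of \eqref{eq:EQUIVALENT NOISE, AGENT m}, that the upstream gains converge so that $r_k^{i}\to r_\infty^{i}>0$. Your treatment is in fact slightly tidier in that you explicitly carry the auxiliary claim $\alpha_k^{j}\to\alpha_\infty^{j}\in(0,1)$ through the induction, whereas the paper leaves this implicit.
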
 

\begin{proof} We proceed by induction on $i\in\mathcal{I}$, and prove that each of the agents' estimators does converge to a unique stationary Kalman filter. 
By \cite[Chapter 4]{anderson2012optimal}, $p_{k|k-1}^1\rightarrow p_\infty^1$ as $k\to\infty$ for any initial condition $p_{1|0}^1\in\mathbb{R}_{>0}$, where $p_\infty^1$ is the unique positive solution of
\begin{equation*}
    p_\infty^1 = a^{\boldsymbol{2}} p_\infty^1 - (ap_\infty^1)^{\boldsymbol{2}} / (p_\infty^{1} + s^1) + q,
\end{equation*}
and $(1 - \alpha_\infty^1) \, a \in (-1,1)$. Consider $i=2$. The equivalent noise $(n_k^2)$ has variance that evolves according to the sequence $(r_k^2) = (s^1 + s^2/(\alpha_k^1)^{\boldsymbol{2}})$, from \eqref{eq:EQUIVALENT NOISE, AGENT m}. Note that $(r_k^2)$ is a sequence in $\mathbb{R}_{>0}$ and that $r_k^2 \rightarrow r_\infty^2\coloneqq s^1 + s^2/(\alpha_\infty^1)^{\boldsymbol{2}} > 0$ when $k\to\infty$. Therefore,
by
%
%
Lemma~\ref{lemma:CONVERGENCE TO STEADY STATE KALMAN FILTER}, 
%
$p_{k|k-1}^2$ converges to the unique positive solution $p_\infty^2\geq q>0$ of
\begin{equation*}
    \begin{aligned}
        & p_\infty^{2} = a^{\boldsymbol{2}}  p_\infty^{2} - (a p_\infty^{2})^{\boldsymbol{2}}/(p_\infty^{2} + r_\infty^{2}) + q,
    \end{aligned}
\end{equation*}
for any initial condition $p_{1|0}^{2}\in\mathbb{R}_{>0}$, and $(1 - \alpha_\infty^2) \, a \in(-1,1)$. Agent 2 also converges to a steady-state Kalman filter. To prove the statement for any $i\in\mathcal{I}\setminus \{1\}$, assume that each of her predecessors $j\in\{1,\ldots,i-1\}$ converges to a steady-state Kalman filter. Therefore $n_k^{i}$ converges in distribution to a Gaussian random variable $n_\infty^{i}$ with second-order moment $r_\infty^{i} \coloneqq s^1 + \sum_{j=2}^{i}s^{j}/(\alpha_\infty^{j-1})^{\boldsymbol{2}}$. Following the same rationale as before, Lemma~\ref{lemma:CONVERGENCE TO STEADY STATE KALMAN FILTER} yields $p_{k|k-1}^{i}\rightarrow p_\infty^{i}$ as $k\to\infty$ for every initial condition $p_{1|0}^{i}\in\mathbb{R}_{>0}$, where $p_\infty^{i}$ is the unique positive solution of
\begin{equation*}
    \begin{aligned}
        p_\infty^{i} = a^{\boldsymbol{2}}  p_\infty^{i} - (a p_\infty^{i})^{\boldsymbol{2}}/(p_\infty^{i} + r_\infty^{i}) + q,
    \end{aligned}
\end{equation*}
and $(1 - \alpha_\infty^i) a \in (-1,1)$, thus proving the statement.
\end{proof}

\subsection{Word-of-Mouth Setup}
We characterize the asymptotic behavior of the predictive variance under the WoM setup by studying
\begin{equation}\label{eq:MAP BETWEEN PREDICTED VARIANCES}
    T(p_{k|k-1}^{m})
    = 
    a^{\boldsymbol{2}}\left(\dfrac{p_{k|k-1}^{m} r_k^{m}}{p_{k|k-1}^{m} + r_k^{m}}\right) + q,
\end{equation}
obtained by substituting \eqref{eq:CORRECTION STEP 1 WoM, m AGENTS} into \eqref{eq:PREDICTION STEP WoM, m AGENTS} for $i=m$.
The following lemma is needed for the analysis.
\begin{lemma}\label{lemma:VARIANCE AS POSITIVE POLYNOMIAL}
    \textit{In the WoM setup, $r_k^{i}$ is a positive and strictly decreasing function of $p_{k|k-1}^{m}\in\mathbb{R}_{>0}$ of the form} 
    \begin{equation}\label{eq:VARIANCE AS POSITIVE POLYNOMIAL}
        f(p_{k|k-1}^{m},i)
        =
        \sum_{j = 0}^{2^{i}-2} \dfrac{c_j}{(p_{k|k-1}^{m})^{\boldsymbol{j}}} \qquad i\in\mathcal{I}\setminus \{1\},
    \end{equation}
    \textit{with coefficients $c_j\in\mathbb{R}_{>0}$, for all $j\in\{0,\ldots,2^{\boldsymbol{i}}-2\}$.}
\end{lemma}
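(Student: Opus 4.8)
The plan is to prove the claim by induction on $i \in \mathcal{I}\setminus\{1\}$, exploiting the recursion for the increments $\gamma_k^i$ recorded in \eqref{eq:CORRECTION STEP 2 WoM, m AGENTS}. Writing $p := p_{k|k-1}^m$ for brevity and observing that $1/\alpha_k^{i-1} = 1 + r_k^{i-1}/p$, the equivalent noise of \eqref{eq:EQUIVALENT NOISE, AGENT m} obeys the one-step recursion
\[
r_k^i = r_k^{i-1} + s^i\left(1 + \frac{r_k^{i-1}}{p}\right)^{\boldsymbol 2}, \qquad r_k^1 = s^1.
\]
This reduces the whole statement to tracking how the functional form of $r_k^{i-1}$, viewed as a function of $p$, propagates under this map.

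For the base case $i = 2$, I would directly expand
\[
r_k^2 = s^1 + s^2\left(1 + \frac{s^1}{p}\right)^{\boldsymbol 2} = (s^1 + s^2) + \frac{2 s^1 s^2}{p} + \frac{s^2 (s^1)^{\boldsymbol 2}}{p^{\boldsymbol 2}},
\]
which is a polynomial in $1/p$ of degree $2 = 2^{2}-2$ with strictly positive coefficients.

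For the inductive step I assume $r_k^{i-1}(p) = \sum_{j=0}^{2^{i-1}-2} c_j / p^{j}$ with all $c_j > 0$. Then $r_k^{i-1}/p$ is a polynomial in $1/p$ of degree $2^{i-1}-1$ with positive coefficients and vanishing constant term, so $1 + r_k^{i-1}/p$ has the same degree with positive coefficients. Squaring a polynomial with positive coefficients yields one with positive coefficients and doubles the degree to $2(2^{i-1}-1) = 2^{i}-2$; multiplying by $s^i > 0$ preserves positivity. Adding the strictly lower-degree term $r_k^{i-1}$ (of degree $2^{i-1}-2 < 2^{i}-2$) cannot cancel any coefficient, so $r_k^i$ is a polynomial in $1/p$ of degree exactly $2^{i}-2$ with strictly positive coefficients, closing the induction.

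Finally, positivity of $f$ on $\mathbb{R}_{>0}$ is immediate since every term $c_j/p^{j}$ is positive, and strict monotone decrease follows because for $i \geq 2$ the polynomial contains at least one term $c_j/p^{j}$ with $j \geq 1$ and $c_j > 0$, each such term being strictly decreasing in $p$ while the constant term is flat. I expect the only delicate point to be verifying that the degree is \emph{exactly} $2^{i}-2$ rather than merely at most that: this rests on the squaring step doubling the degree and on the leading coefficient $s^i (c_{2^{i-1}-2})^{\boldsymbol 2}$ being strictly positive, which is precisely what the positive-coefficient bookkeeping guarantees.
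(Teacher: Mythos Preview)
Your proof is correct and follows essentially the same inductive route as the paper, tracking the polynomial-in-$1/p$ structure through the recursion; the only difference is that the paper outsources the induction to a cited earlier work (showing each $\gamma_k^i$ has degree $2^i-2$ with positive coefficients) and then sums, whereas you carry out the induction directly on $r_k^i$ and make the degree-doubling and positivity bookkeeping explicit. Your version is therefore self-contained and arguably cleaner.
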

\begin{proof}
    It has been proved in \cite{slowConvergenceSL} that $\gamma_k^{i}$ is a polynomial of order $2^{\boldsymbol{i}}-2$ in $1/p_{k|k-1}^{m}$ having positive coefficients. Since \eqref{eq:EQUIVALENT NOISE, AGENT m} can be recast as $r_k^{i} = s^1 + \gamma_k^2 + \cdots + \gamma_k^i$, it is also a polynomial of degree $2^{\boldsymbol{i}}-2$ in $1/p_{k|k-1}^m$ with coefficients $c_j\in\mathbb{R}_{>0}$ for every $j\in\{0, \ldots, 2^{\boldsymbol{i}}-2\}$, thus yielding \eqref{eq:VARIANCE AS POSITIVE POLYNOMIAL}. Because $p_{k|k-1}^{m}\in\mathbb{R}_{>0}$, $f$ is the sum of one positive constant (i.e., $c_0$) and $2^{\boldsymbol{i}}-2$ positive, monotonically decreasing functions of $p_{k|k-1}^m$, proving the statement.
\end{proof}

We can now state one of the main results of this paper, i.e., the existence of a unique positive fixed point for \eqref{eq:MAP BETWEEN PREDICTED VARIANCES}.

\begin{theorem}\label{th:xistence of a unique fixed point in WoM}
    \textit{Consider the WoM updates in \eqref{eq:PREDICTION STEP WoM, m AGENTS} and \eqref{eq:CORRECTION STEP 2 WoM, m AGENTS}. Then, Equation \eqref{eq:MAP BETWEEN PREDICTED VARIANCES} has a unique positive fixed point $p_\infty^m$.}

\end{theorem}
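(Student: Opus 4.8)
The plan is to reformulate the fixed-point equation $T(p) = p$, where I abbreviate $p \coloneqq p_{k|k-1}^m$ and $r(p) \coloneqq r_k^m = f(p,m)$, into the intersection of two monotone curves whose qualitative behavior is easy to control. First I would observe that any positive fixed point must satisfy $p > q$: since $r(p) > 0$ for all $p > 0$ by Lemma~\ref{lemma:VARIANCE AS POSITIVE POLYNOMIAL}, the term $a^{\boldsymbol{2}} p\, r(p)/(p + r(p))$ is strictly positive whenever $a \neq 0$, so $p = T(p) > q$. (If $a = 0$ then $T \equiv q$ and $p_\infty^m = q$ is trivially the unique fixed point, so I assume $a \neq 0$ henceforth.) On $(q,\infty)$ the identity $(p+r)/(pr) = 1/p + 1/r$ lets me take reciprocals in $p - q = a^{\boldsymbol{2}} p\, r(p)/(p+r(p))$ and rearrange it into the equivalent form
\begin{equation*}
\frac{1}{r(p)} = \frac{a^{\boldsymbol{2}}}{p - q} - \frac{1}{p} \eqqcolon \phi(p).
\end{equation*}
Thus the positive fixed points of $T$ correspond exactly to intersections of $\psi(p) \coloneqq 1/r(p)$ with $\phi(p)$ on $(q,\infty)$.

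The second step is to analyze these two curves. By Lemma~\ref{lemma:VARIANCE AS POSITIVE POLYNOMIAL}, $r$ is strictly decreasing and positive on $(0,\infty)$ (for $m \ge 2$ all coefficients $c_1,\dots,c_{2^m-2}$ are strictly positive, so $r$ carries genuine negative powers of $p$), hence $\psi = 1/r$ is strictly increasing and, on any interval bounded away from $0$, finite and positive. For the right-hand side I would write $\phi(p) = [(a^{\boldsymbol{2}} - 1)p + q]/[p(p-q)]$, which shows that $\phi > 0$ precisely on $(q, p^*)$ with $p^* \coloneqq q/(1 - a^{\boldsymbol{2}})$, that $\phi(p^*) = 0$, and that $\phi(p) \to +\infty$ as $p \to q^+$. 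Since a fixed point requires $\phi(p) = \psi(p) > 0$, every fixed point lies in $(q, p^*)$. A short computation gives $\phi'(p) = 1/p^{\boldsymbol{2}} - a^{\boldsymbol{2}}/(p-q)^{\boldsymbol{2}}$, and $\phi' < 0$ is equivalent to $p < q/(1 - |a|)$; because $1 - a^{\boldsymbol{2}} \ge 1 - |a|$ one has $p^* \le q/(1-|a|)$, so $\phi$ is strictly decreasing on all of $(q, p^*)$. Consequently $\phi - \psi$ is strictly decreasing there, with $\phi - \psi \to +\infty$ as $p \to q^+$ and $\phi - \psi \to -1/r(p^*) < 0$ as $p \to p^{*-}$; by the intermediate value theorem it has exactly one zero, which is the unique positive fixed point $p_\infty^m$.

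The main obstacle is that $r(p)$ is a degree-$(2^m - 2)$ rational function of $p$, so clearing denominators to count roots of a high-degree polynomial, or analyzing the monotonicity and concavity of $T$ directly, is unwieldy. The reformulation above is designed precisely to sidestep this: it isolates $1/r(p)$ on one side and uses only the single qualitative fact that $r$ is strictly decreasing, delegating all of the combinatorial complexity of the cascade to Lemma~\ref{lemma:VARIANCE AS POSITIVE POLYNOMIAL}. The only genuine inequality I must verify by hand is the comparison $p^* \le q/(1 - |a|)$ (equivalently $a^{\boldsymbol{2}} \le |a|$ for $|a| \le 1$), which guarantees that $\phi$ is monotone over the whole admissible window $(q, p^*)$ and hence that the two curves meet exactly once.
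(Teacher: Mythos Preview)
Your argument is correct and follows essentially the same route as the paper. The paper rearranges the fixed-point equation to $r(p) = g(p)$ with $g(p) = p(q-p)/[p(1-a^{\boldsymbol 2})-q]$, observes that $g$ is positive exactly on $(q,\,q/(1-a^{\boldsymbol 2}))$ and strictly increasing there (their critical point $p'=q/(1+|a|)$ lies below $q$), and intersects this with the strictly decreasing $r$ from Lemma~\ref{lemma:VARIANCE AS POSITIVE POLYNOMIAL}. Your $\phi$ is precisely $1/g$, so your decreasing/increasing pair $(\phi,\psi)$ is the reciprocal of the paper's increasing/decreasing pair $(g,r)$; the admissible window $(q,p^*)$, the endpoint behavior, and the monotonicity check $p^*\le q/(1-|a|)$ (which is the paper's $p''$) are the same facts viewed through the reciprocal.
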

\begin{proof}
    Every fixed point $p_\infty^m$ for \eqref{eq:MAP BETWEEN PREDICTED VARIANCES} must satisfy
    \begin{equation}\label{eq:FIXED POINT SECOND EQUATION}
        f(p_\infty^m, m) = \dfrac{p_\infty^m (q - p_\infty^m)}{p_\infty^m(1 - a^{\boldsymbol{2}}) - q}.
    \end{equation}
    To simplify the analysis, we replace $p_\infty^m$ by $p\in\mathbb{R}_{>0}$, and study both sides of \eqref{eq:FIXED POINT SECOND EQUATION}. The right-hand side, denoted as
    \begin{equation*}
        g(p) \coloneqq \dfrac{p(q-p)}{p(1 - a^{\boldsymbol{2}}) - q},
    \end{equation*}
    is positive if $p\in(q,q/(1-a^{\boldsymbol{2}}))$, and non-positive otherwise. In addition, $g(p)\rightarrow +\infty$ when $p\rightarrow q/(1-a^{\boldsymbol{2}})$ from the left, and is monotonically increasing when 
    \begin{equation*}
        p 
        \in 
        \left( 
        \dfrac{q(1 - |a|)}{1-a^{\boldsymbol{2}}},   
        \dfrac{q}{1-a^{\boldsymbol{2}}}\right) 
        \cup 
        \left( 
        \dfrac{q}{1-a^{\boldsymbol{2}}},
        \dfrac{q(1 + |a|)}{1-a^{\boldsymbol{2}}}
        \right)
    \end{equation*}
    for every $q\in\mathbb{R}_{>0}$ and $a\in(-1,1)$. From Lemma~\ref{lemma:VARIANCE AS POSITIVE POLYNOMIAL}, $f$ is positive and decreasing in $p$. Therefore, $f$ and $g$ must intersect at exactly one point $p_\infty\in\big(q,q/(1-a^{\boldsymbol{2}})\big)$, proving the existence of a unique positive fixed point.
\end{proof}

Therefore, if the positive sequence $(p_{k|k-1}^{m})$ is convergent, then it must converge to $p_\infty^{m}$.
The following theorem establishes the convergence of $(p_{k|k-1}^{m})$ to $p_\infty^{m}$ for the case $m=2$.
\begin{theorem}\label{th:convergence in WoM, 2 agents}
    \textit{Let $m=2$ and $p_\infty^m$ be the unique positive fixed point of \eqref{eq:MAP BETWEEN PREDICTED VARIANCES}. Then $p_{k|k-1}^m\rightarrow p_\infty^m$ for any $p_{1|0}^m\in\mathbb{R}_{>0}$.}
\end{theorem}
\begin{proof} 
    We want to show that \eqref{eq:MAP BETWEEN PREDICTED VARIANCES} is a contraction on $\mathbb{R}_{>0}$ for $m=2$.
    For the sake of readability, we are going to suppress the second argument in \eqref{eq:VARIANCE AS POSITIVE POLYNOMIAL}.
    We start by finding an upper bound for
    \begin{equation}\label{eq:CONTRACTION CHECK}
        \begin{aligned}
            T(p') - T(p'') 
            \overset{\eqref{eq:MAP BETWEEN PREDICTED VARIANCES}}{=}
            a^{\boldsymbol{2}}
            \left( 
            \dfrac{p'f(p')}{p' + f(p')}
            -
            \dfrac{p''f(p'')}{p'' + f(p'')}
            \right) \\
            = 
            a^{\boldsymbol{2}}
            \left(p' - p'' + \dfrac{(p'')^{\boldsymbol{2}}}{p'' + f(p'')}
            -
            \dfrac{(p')^{\boldsymbol{2}}}{p' + f(p')}
            \right),
        \end{aligned}
    \end{equation}
    where we assume $p' \geq p'' > 0$ without loss of generality. Lemma~\ref{lemma:VARIANCE AS POSITIVE POLYNOMIAL} then implies that $f(p'') \geq f(p') > 0$.
    One can see that the inequalities $(p'')^{\boldsymbol{2}}p' \leq p''(p')^{\boldsymbol{2}}$ and $(p'')^{\boldsymbol{2}}f(p') \leq 
    f(p'')(p')^{\boldsymbol{2}}$ hold simultaneously, and therefore
    \begin{equation*}\label{eq:SIGN CHECK 1}
            \dfrac{(p'')^{\boldsymbol{2}}}{p'' + f(p'')}
            -
            \dfrac{(p')^{\boldsymbol{2}}}{p' + f(p')}
            \leq 0.
    \end{equation*}
    Substituting the latter in \eqref{eq:CONTRACTION CHECK} yields the upper bound
    \begin{equation*}
        T(p') - T(p'') \leq a^{\boldsymbol{2}}(p'-p'').
    \end{equation*}
    To obtain a lower bound on \eqref{eq:CONTRACTION CHECK} we now show that
    \begin{equation}\label{eq:CONTRACTION CHECK 2}
        \begin{aligned}
            \frac{1}{a^{\boldsymbol{2}}} \dfrac{\mathrm{d}T(p)}{\mathrm{d}p} 
            \overset{\eqref{eq:MAP BETWEEN PREDICTED VARIANCES}}{=}
            2f^{\boldsymbol{2}}(p) + 2pf(p) + p^{\boldsymbol{2}} \left( \dfrac{\mathrm{d}f(p)}{\mathrm{d}p} +1 \right)
            \geq -1
        \end{aligned}
    \end{equation}
    for every $p\in\mathbb{R}_{>0}$. From Lemma~\ref{lemma:VARIANCE AS POSITIVE POLYNOMIAL}, when $i=2$:
    \begin{equation*}\label{eq:FUNCTION f AND DERIVATIVE, 2 AGENTS}
        f(p) 
        = 
        c_0 + \dfrac{c_1}{p} + \dfrac{c_2}{p^{\boldsymbol{2}}}
        \implies
        \dfrac{\mathrm{d}f(p)}{\mathrm{d}p}
        = 
        -\dfrac{c_1}{p^{\boldsymbol{2}}} -\dfrac{2c_2}{p^{\boldsymbol{3}}},
    \end{equation*}
    where $c_0,c_1,c_2\in\mathbb{R}_{>0}$. Substituting inside \eqref{eq:CONTRACTION CHECK 2} yields
    \begin{multline*}
        \frac{1}{a^{\boldsymbol{2}}} \dfrac{\mathrm{d}T(p)}{\mathrm{d}p}
        = 
        p^{\boldsymbol{2}} + 2c_0p + (2c_0^{\boldsymbol{2}} + c_1) +
        \dfrac{4c_0c_1}{p} \\
        + \dfrac{2c_1^{\boldsymbol{2}} + 4c_0c_2}{p^{\boldsymbol{2}}} +
        \dfrac{4c_1c_2}{p^{\boldsymbol{3}}} +
        \dfrac{2c_2^{\boldsymbol{2}}}{p^{\boldsymbol{4}}} 
        \geq -1, 
        \quad \forall p\in\mathbb{R}_{>0}.
    \end{multline*}
    Therefore, $|T(p') - T(p'')|\leq a^{\boldsymbol{2}} |p' - p''|$, and hence $T$ is a contraction on $\mathbb{R}_{>0}$.
    Our statement then follows directly from the Banach fixed point theorem ~\cite[Chapter 2]{kolmogorov1975introductory}.
\end{proof}




\section{Numerical Examples$^1$}\label{sec:EXPERIMENTS}
This section presents numerical examples that corroborate the theoretical results of Section~\ref{sec:THEORETICAL ANALYSIS}. In particular, the estimation performance of agents operating under the PP and WoM setups is compared. In addition, we numerically verify the convergence of the prediction error to a stationary process.

\subsection{Experimental Setting}
We consider the multi-agent setups of Sections~\ref{subsec:PRIVATE PRIOR} and \ref{subsec:PUBLIC PRIOR}, where $m=3$ agents try to estimate the scalar state of \eqref{eq:AUTOREGRESSIVE DYNAMICS} having initial condition $x_0\sim\mathcal{N}(\hat{x}_0,p_0)$, with $\hat{x}_0 = 25$ and $p_0 = 3$. The following system parameters will be considered: $a = 0.95$, $q = 1$, and $s^{i}=1$ for all $i\in\mathcal{I}=\{1,2,3\}$.
We initialize the Kalman filters of every agent $i\in\mathcal{I}$ to $\hat{x}_{0|0}^{i} = \hat x_0$ and $p_{0|0}^{i} = p_0$, which are equivalent to $\hat{x}_{1|0}^{i} = a\hat{x}_0 = 23.75$ and $p_{1|0}^{i} = a^{\boldsymbol{2}}p_0 + q \approx 3.7$, respectively. 
\let\thefootnote\relax
\hypersetup{hidelinks}
\footnote{
\noindent$^1$Code available at 
\nolinkurl{github.com/andreadacol98/on_wom_and_pp_sequential_sl}.
}

\subsection{Experimental Results}
A first batch of simulations addresses the asymptotic behavior of the variance for the one-step-ahead predictions, in both setups. Figure~\ref{fig:convergence of prediction variance} helps visualize the key aspects of PP and WoM learning, postulated in Section~\ref{sec:THEORETICAL ANALYSIS}. In a PP interconnection, convergence to a unique positive fixed point $p_\infty^{i}$ is attained by agent $i$ for all positive initial conditions. Note that different agents will have different fixed points. By Theorem~\ref{th:convergence of the cascade}, $p_\infty^{i}$ is an increasing function of $i$: one agent cannot predict the next state more accurately than its predecessors, that is, $0<p_\infty^1<p_\infty^2<p_\infty^3$. 
In the WoM setup, all the agents share the same fixed point for the prediction error variance, i.e., $p_\infty^1=p_\infty^2=p_\infty^3$, which is expected as the $m$-th agent is enforcing her belief. Furthermore, the simulations suggest convergence to this limit point for any initial condition, independently of the number of agents. Numerical values of the fixed points are summarized in Table~\ref{table:Prediction error variance}, and confirm the theoretical findings. One key fact emerges from our example: WoM improves the variance in the state-predictions of the last agent, at least asymptotically, i.e., $p_\infty^3$ is smaller in the WoM setup than in PP. {Section~\ref{subsec:discussion} provides an intuitive explanation for this.}

\begin{figure}[t!]
    \vspace{0.4 em}
    \centering
    \begin{subfigure}{\linewidth}
        \centering
        \includegraphics[width=\linewidth]{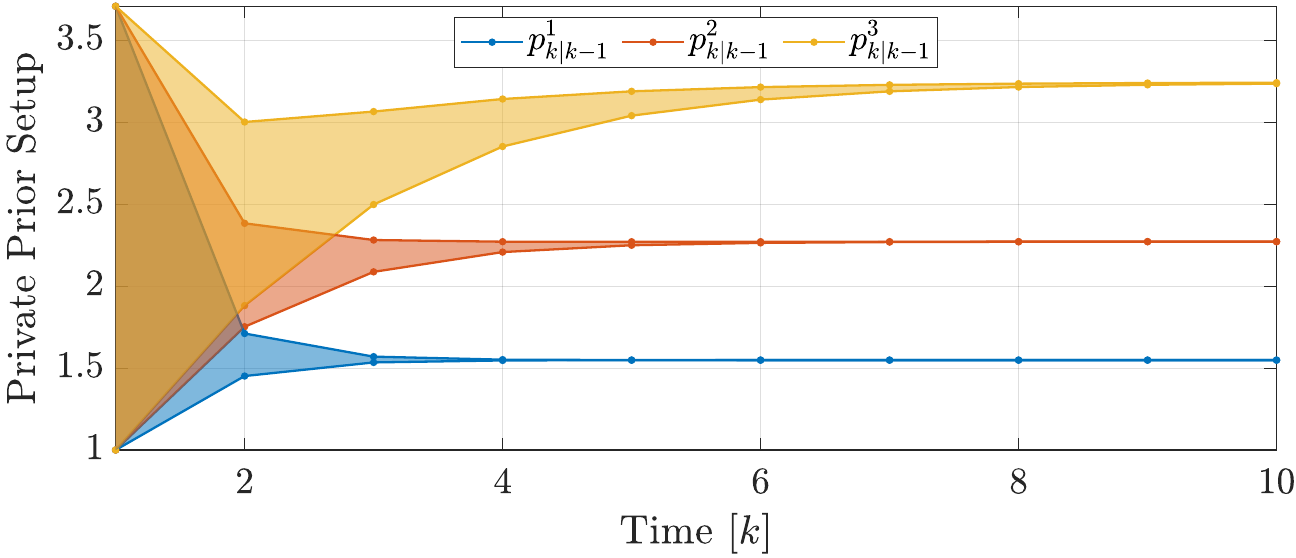}
        \label{fig: PREDICTED VARIANCE, CASCADE} 
    \end{subfigure}
    \vspace{-1.45 cm}
    
    \begin{subfigure}{\linewidth}
        \centering
        \includegraphics[width=\linewidth]{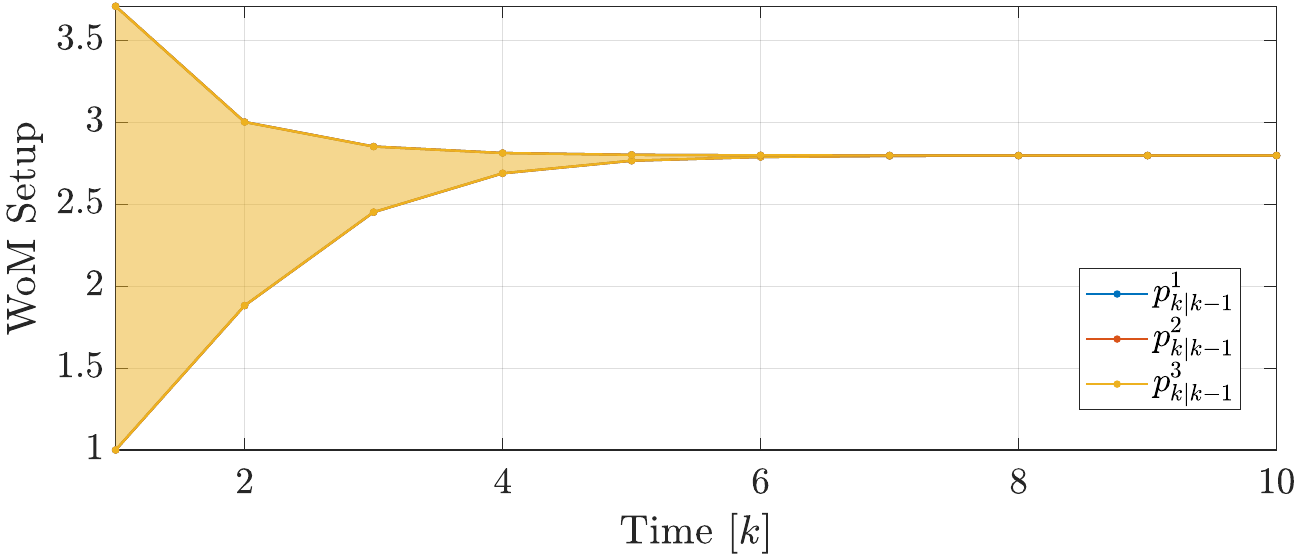}
        \label{fig: PREDICTED VARIANCE, WoM} 
    \end{subfigure}
    \vspace{-0.8 cm}
    \caption{Asymptotic behavior of the prediction error variances in the PP (upper panel) and WoM cases (lower panel).}
    \label{fig:convergence of prediction variance}
\end{figure}

Similar conclusions can be drawn for $(\alpha_{k}^{i})$ and $(p_{k|k}^{i})$ of different agents. Convergence of these sequences to unique positive fixed points follows directly from our previous considerations on $(p_{k|k-1}^{i})$, and is illustrated in Figure~\ref{fig:convergence of posterior variance}.
Interestingly, the agents converge to larger Kalman gains in the WoM setup, suggesting that more attention is put on other agents' actions than on the public prior.

\begin{figure}[ht]
    \vspace{0.4 em}
  \centering
  \begin{subfigure}{0.48\columnwidth}
    \includegraphics[width=\linewidth]{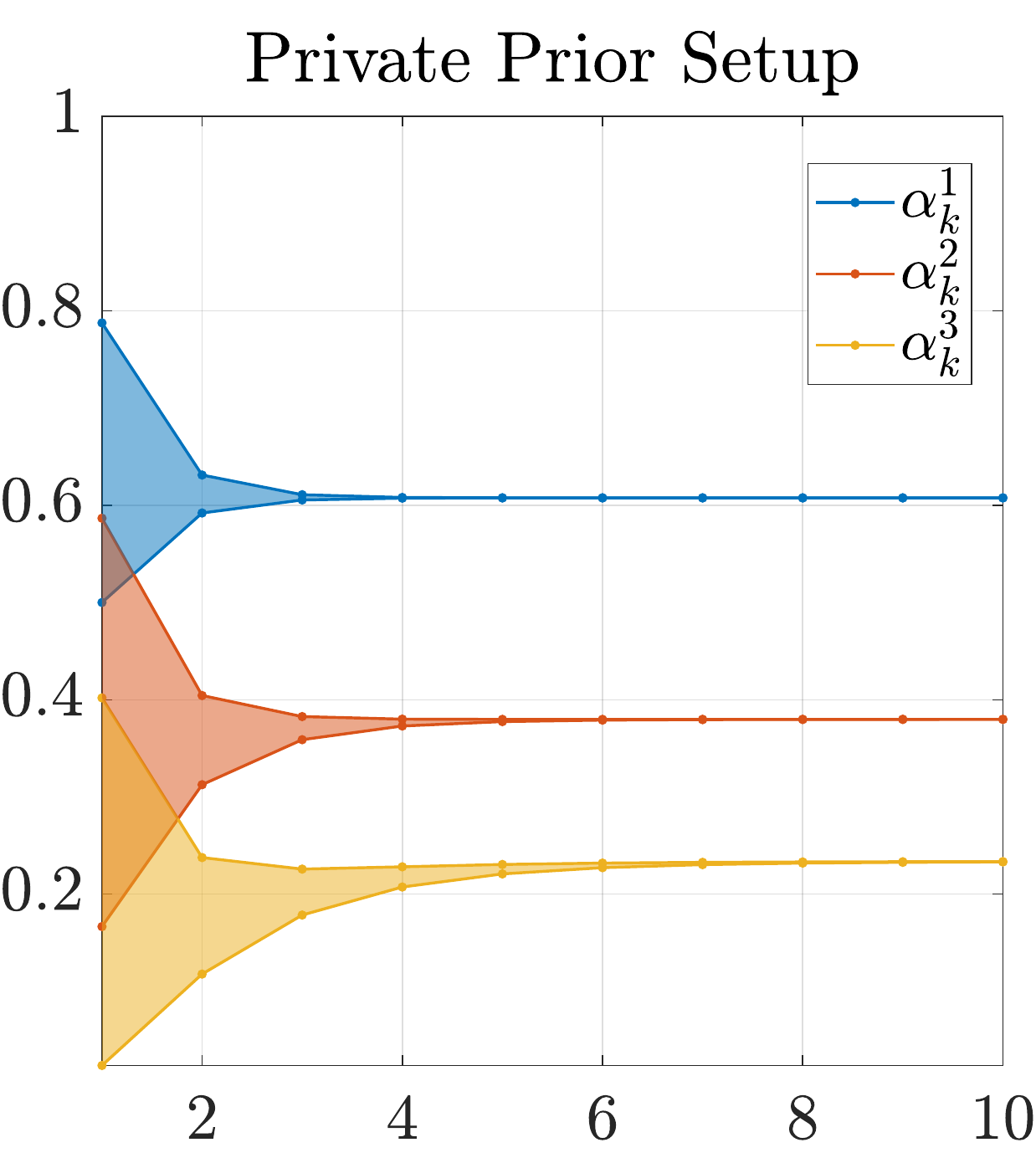}
  \end{subfigure}
  \hfill
  \begin{subfigure}{0.48\columnwidth}
    \includegraphics[width=\linewidth]{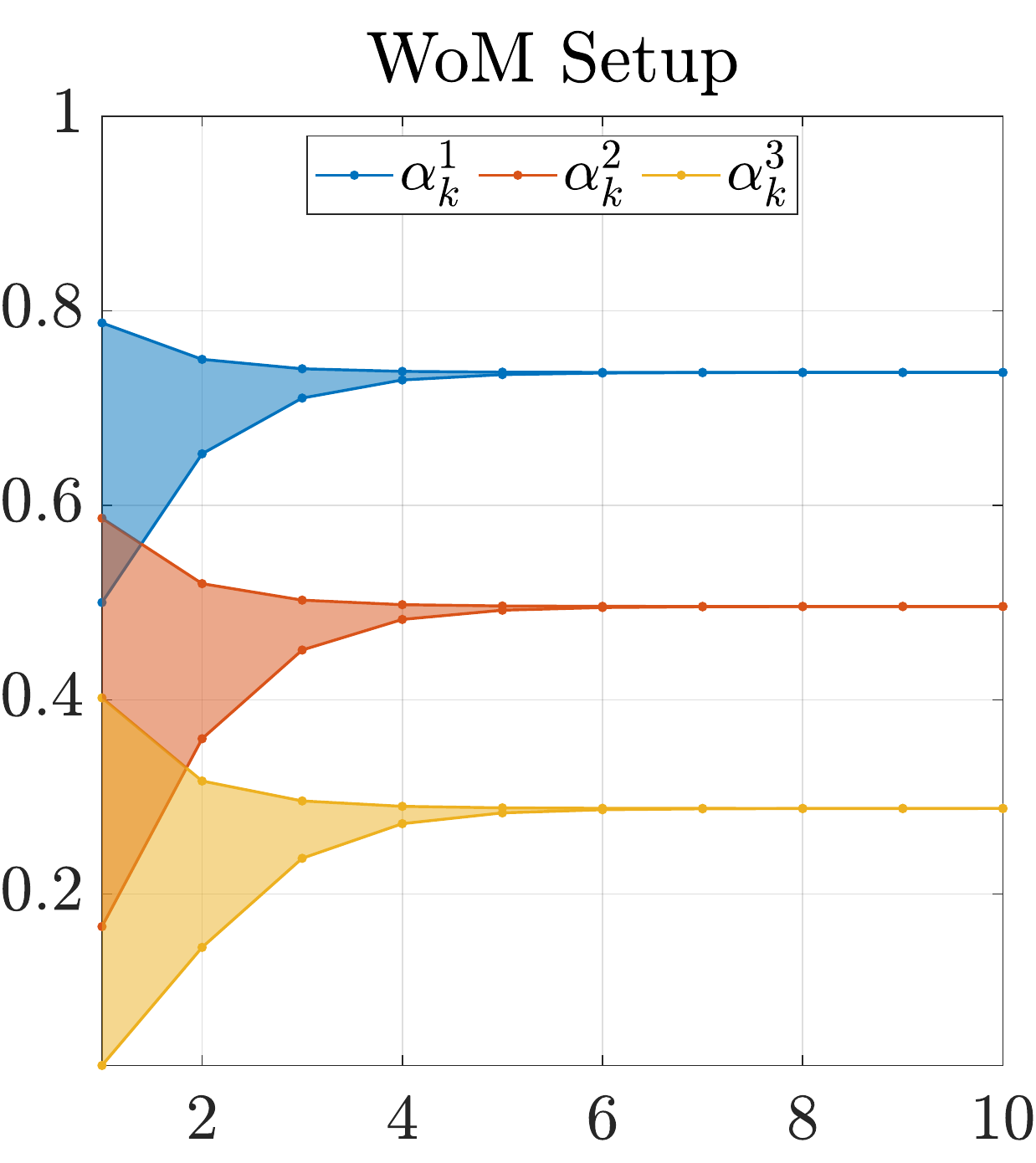}
  \end{subfigure}
  \begin{subfigure}{0.48\columnwidth}
    \includegraphics[width=\linewidth]{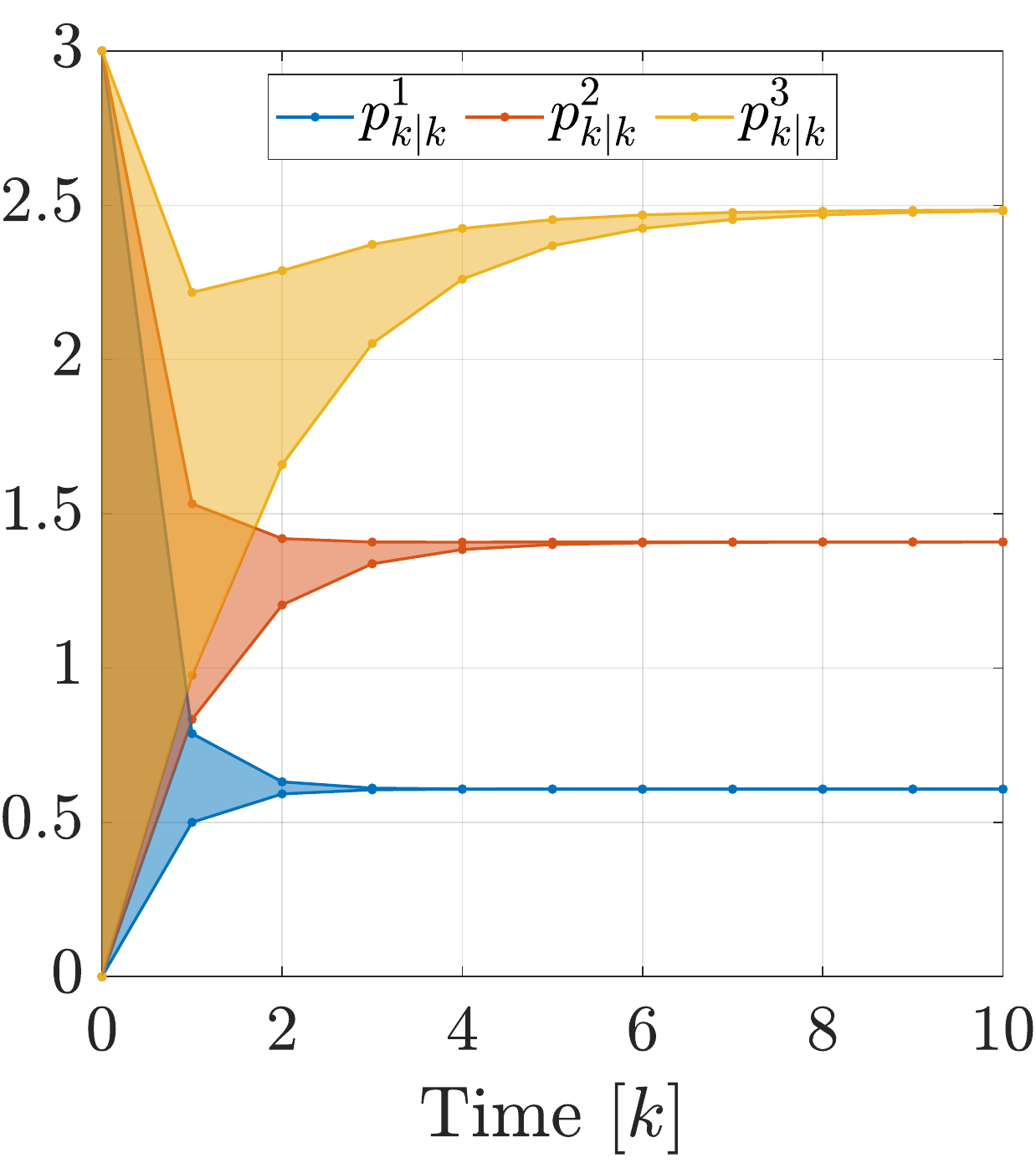}
  \end{subfigure}
  \hfill
  \begin{subfigure}{0.48\columnwidth}
    \includegraphics[width=\linewidth]{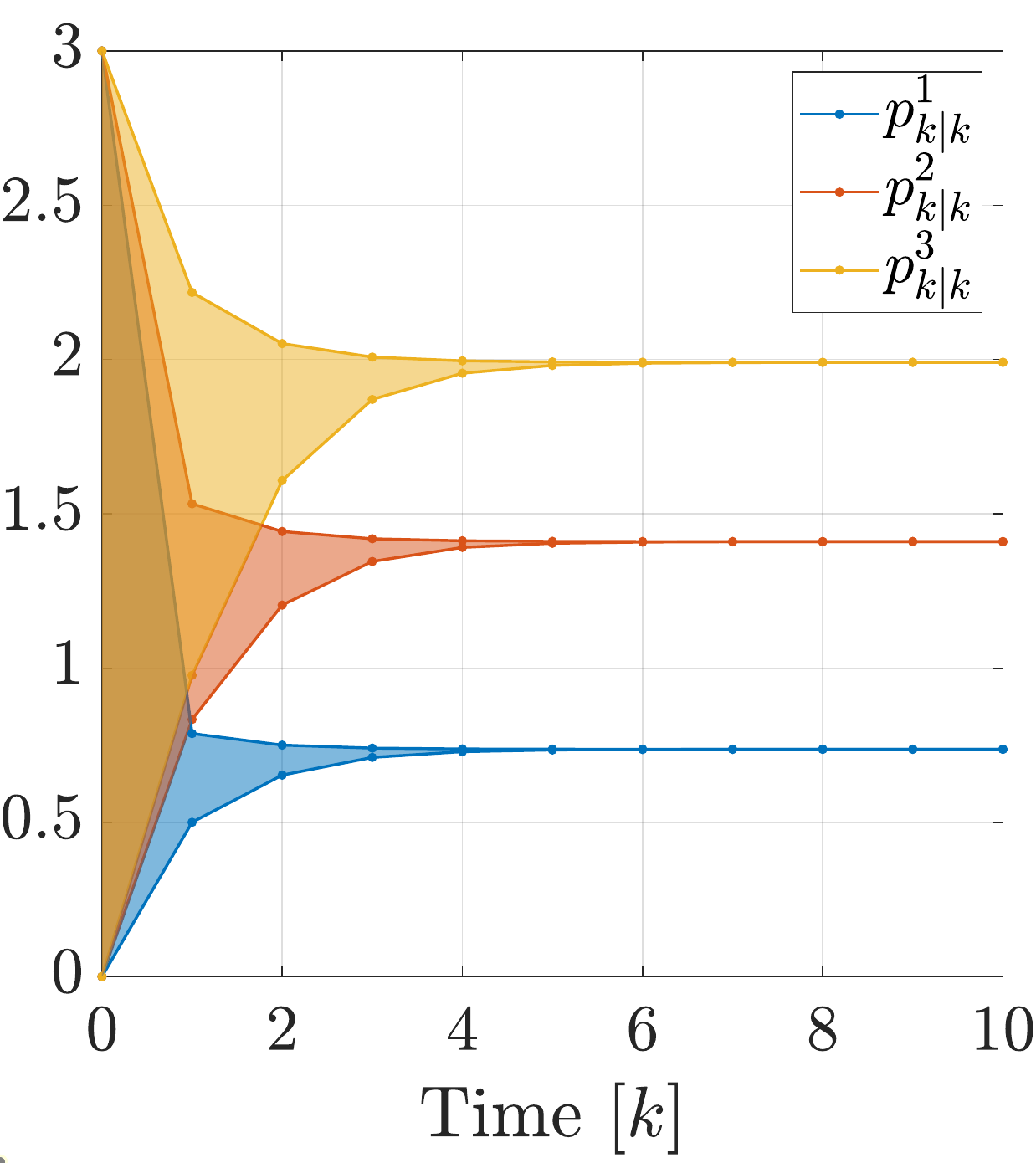}
  \end{subfigure}
  \caption{Asymptotic behavior of the Kalman gains (upper panels) and the posterior variances (lower panels) in the PP (left panels) and WoM cases (right panels).}
  \label{fig:convergence of posterior variance}
\end{figure}

To further support our claim of WoM being effective in reducing the variance of the estimation error for the last agent, we study the mean squared error of $\hat{x}_{k|k-1}^{i}$ and $\hat{x}_{k|k}^{i}$ for every agent $i$ numerically. Two separate experiments are run where we evaluate
\stepcounter{equation}
\begin{align}
        & \textrm{MSE}\big(\hat{x}_{k|k-1}^{i}\big) = \dfrac{1}{K}\sum_{k=1}^{K} \big( \hat{x}_{k|k-1}^{i} - \Tilde{x}_k \big)^{\boldsymbol{2}}, & i\in\mathcal{I}, \tag{\theequation.a} \label{eq:MSE OF PREDICTIONS} \\
        & \textrm{MSE}\big(\hat{x}_{k|k}^{i}\big) = \dfrac{1}{K}\sum_{k=1}^{K} \big( \hat{x}_{k|k}^{i} - \Tilde{x}_k \big)^{\boldsymbol{2}}, & i\in\mathcal{I}, \tag{\theequation.b} \label{eq:MSE OF POSTERIORS}
\end{align}
using the same sequence of realizations $(\tilde{x}_{k})$, with $K = 10^3$. 
Notice that, for large $K$, \eqref{eq:MSE OF PREDICTIONS} and \eqref{eq:MSE OF POSTERIORS} well approximate $\mathbb{E} \{ ( \hat{x}_{k|k-1}^{i} - x_k )^{\boldsymbol{2}} \}$ and $\mathbb{E} \{ ( \hat{x}_{k|k}^{i} - x_k )^{\boldsymbol{2}} \}$, respectively.
Numerical results from \eqref{eq:MSE OF PREDICTIONS} are found in Table~\ref{table:Prediction mean MSE}, and those from \eqref{eq:MSE OF POSTERIORS} are in Table~\ref{table:Posterior mean MSE}. Once again, the last agent benefits from the use of WoM, as reflected by the smaller values of the mean squared errors, compared to PP.

As one last example, we compare $(\tilde{x}_k^3)$ with the sequence of one-step-ahead predictions $(\hat{x}_{k|k-1}^3)$ in both setups. This time, we only address $i=3$, as similar conclusions can be drawn for the other agents. Figure~\ref{fig:convergence of prediction mean} shows how $(\tilde{x}_k^{3})$ stays inside the $3 \sigma$ confidence interval and $(\hat x_{k|k-1}^{3})$ converges to a zero-mean random process with variance $p_\infty^{3}$, for both setups.

\begin{table}[ht]
    \centering
    \caption{Fixed points for the prediction error variance.}
    \begin{tabular}{p{0.8cm}|p{0.8cm}|p{0.8cm}|p{0.8cm}|p{0.8cm}|p{0.8cm}}
        \hline
        \multicolumn{6}{c}{\textbf{Fixed points$-$prediction error variance}} \\
        \hline
        \multicolumn{2}{c|}{\textbf{Agent 1}} & \multicolumn{2}{c|}{\textbf{Agent 2}} & \multicolumn{2}{c}{\textbf{Agent 3}} \\
        \hline
        \multicolumn{1}{c|}{PP} & \multicolumn{1}{c|}{WoM} & \multicolumn{1}{c|}{PP} & \multicolumn{1}{c|}{WoM} & \multicolumn{1}{c|}{PP} & \multicolumn{1}{c}{WoM} \\
        \hline
        \multicolumn{1}{c|}{1.55} & \multicolumn{1}{c|}{2.79} & \multicolumn{1}{c|}{2.27} & \multicolumn{1}{c|}{2.79} & \multicolumn{1}{c|}{3.34} & \multicolumn{1}{c}{2.79} \\
        \hline 
    \end{tabular}
    \label{table:Prediction error variance}
\end{table}

\begin{table}[!ht]
    \centering
    \caption{MSE of the one-step-ahead predictions.}
    \begin{tabular}{p{0.8cm}|p{0.8cm}|p{0.8cm}|p{0.8cm}|p{0.8cm}|p{0.8cm}}
        \hline
        \multicolumn{6}{c}{\textbf{MSE$-$one-step-ahead prediction}} \\
        \hline
        \multicolumn{2}{c|}{\textbf{Agent 1}} & \multicolumn{2}{c|}{\textbf{Agent 2}} & \multicolumn{2}{c}{\textbf{Agent 3}} \\
        \hline
        \multicolumn{1}{c|}{PP} & \multicolumn{1}{c|}{WoM} & \multicolumn{1}{c|}{PP} & \multicolumn{1}{c|}{WoM} & \multicolumn{1}{c|}{PP} & \multicolumn{1}{c}{WoM} \\
        \hline
        \multicolumn{1}{c|}{1.54} & \multicolumn{1}{c|}{2.71} & \multicolumn{1}{c|}{2.13} & \multicolumn{1}{c|}{2.71} & \multicolumn{1}{c|}{3.24} & \multicolumn{1}{c}{2.71} \\
        \hline 
    \end{tabular}
    \label{table:Prediction mean MSE}
\end{table}

\begin{table}[!ht]
    \centering
    \caption{MSE of the a-posteriori estimates.}
    \begin{tabular}{p{1cm}|p{1cm}|p{1cm}|p{1cm}|p{1cm}|p{1cm}}
        \hline
        \multicolumn{6}{c}{\textbf{MSE$-$posterior estimate}} \\
        \hline
        \multicolumn{2}{c|}{\textbf{Agent 1}} & \multicolumn{2}{c|}{\textbf{Agent 2}} & \multicolumn{2}{c}{\textbf{Agent 3}} \\
        \hline
        \multicolumn{1}{c|}{PP} & \multicolumn{1}{c|}{WoM} & \multicolumn{1}{c|}{PP} & \multicolumn{1}{c|}{WoM} & \multicolumn{1}{c|}{PP} & \multicolumn{1}{c}{WoM} \\
        \hline
        \multicolumn{1}{c|}{0.66} & \multicolumn{1}{c|}{0.78} & \multicolumn{1}{c|}{1.58} & \multicolumn{1}{c|}{1.49} & \multicolumn{1}{c|}{2.73} & \multicolumn{1}{c}{2.10} \\
        \hline 
    \end{tabular}
    \label{table:Posterior mean MSE}
\end{table}


\subsection{Discussion} \label{subsec:discussion}

In this subsection we provide an intuitive explanation for why agents with large $i \in \mathcal{I}$ attain a smaller variance $p_{k|k}^{i}$ for the estimation error under the WoM setup, compared to PP. Intuitively, as $i$ increases, the prediction and posterior variances increase, due to the larger variance of the equivalent noise. Thus, in the WoM setup, all the agents (except for the last one) are forced to use a more spread-out prior than their own. Inevitably, agent 1, who alone can enjoy measurement noise with constant variance, will suffer a performance degradation, and output a larger Kalman gain compared to the (optimal) private-prior case. In other words, it will put more trust on the measurements than on its prior. As a result of this, the next agent will enjoy an equivalent noise with smaller variance. Together with being fed worse prior knowledge, she will also trust her measurements 
more than her prior belief, resulting in a larger Kalman gain. One can apply this reasoning inductively to all agents with $i > 1$, and when it comes to the last agent, the variance of her equivalent noise has decreased considerably compared to the PP case. This, plus the fact that she has been using her own prior (which corresponds to the public prior in the WoM case), makes the Kalman filter of agent $m$ converge to a steady-state filter with a smaller prediction error variance $p_\infty^m$ than in the PP setup.

It is important to note, however, that the data processing inequality (DPI)~\cite{Cover-Thomas-06} implies that if information is passed through a sequence of transformations without new inputs, the total information content (e.g., mutual information with the true state) can only stay the same or decrease. Therefore, even though WoM may yield a lower variance estimate for the final agent compared to PP in finite-agent settings, DPI implies that in sufficiently deep hierarchies, both frameworks eventually suffer from severe information degradation, fundamentally limiting the accuracy of the final estimate.

\begin{figure}[t!]
    \vspace{0.4 em}
    \centering
    \begin{subfigure}{\linewidth}
        \centering
        \includegraphics[width=\linewidth]{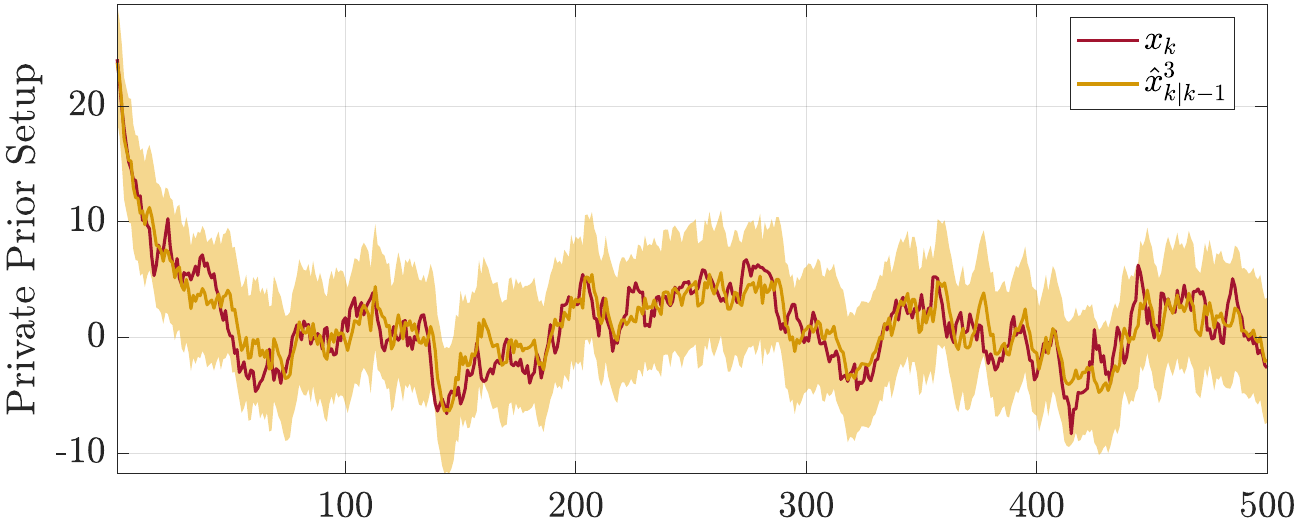} 
    \end{subfigure}
    \vspace{-0.7 cm}
    
    \begin{subfigure}{\linewidth}
        \centering
        \includegraphics[width=\linewidth]{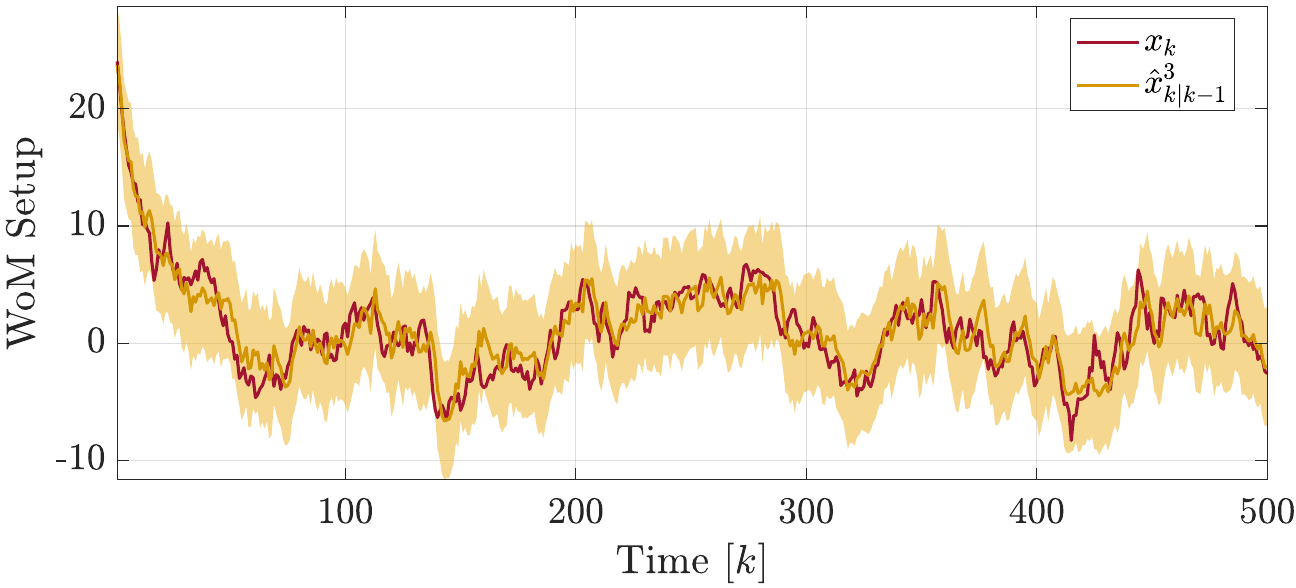}
    \end{subfigure}
    \vspace{-0.4 cm}
    \caption{Convergence of the one-step-ahead predictions to a stationary process in the PP (upper panel) and WoM cases (lower panel).}
    \label{fig:convergence of prediction mean}
\end{figure}




\section{Conclusions} \label{sec:CONCLUSIONS}

In this work, we investigate a specific variant of social learning, namely Word-of-Mouth learning, where a series of Kalman filter agents estimates the state of a dynamical system, with the estimate of the final agent serving as the public belief. We extensively study the stationary properties of the estimates produced by the agents, both theoretically and through simulations. A key finding of our study is that some agents, particularly those closer to the final agent, benefit from using the public belief rather than relying on their private knowledge. In contrast, agents closer to the data-generating mechanism experience performance degradation. 

Extending our analysis to vector-valued states emerges as a natural direction for future work. However, this presents significant challenges due to the non-commutativity of matrix operations, which complicates the expression of the WoM DARE, as well as the determination of its fixed points.



\bibliography{References}

\end{document}